\DeclarePairedDelimiterX{\infdivx}[2]{(}{)}{%
  #1\;\delimsize\|\;#2%
}
\DeclarePairedDelimiter{\norm}{\lVert}{\rVert}
\newtheorem{theo}{Theorem}
\newtheorem{defi}{Definition}
\newtheorem{lem}{Lemma}
\newtheorem{ex}{Example}
\newenvironment{brsm}{
  \left[ \begin{smallmatrix} }{%
  \end{smallmatrix} \right]}
\newcommand{\derya}[1]{\textit{\textcolor{blue}{#1}}}
\newcommand{\Kmatch}{K_{\rm M}}
\newcommand{\graphweight}{{q}}
\begin{document}
\title{Distributed Computing of Functions of\\ Structured Sources with Helper Side Information}
\author{
  \IEEEauthorblockN{Derya Malak}
  \IEEEauthorblockA{Communication Systems Department,   EURECOM, France,\\
derya.malak@eurecom.fr
}
\thanks{Funded by the European Union (ERC, SENSIBILITÉ, 101077361). Views and opinions expressed are however those of the author only and do not necessarily reflect those of the European Union or the European Research Council. Neither the European Union nor the granting authority can be held responsible for them.}
}

\maketitle

\begin{abstract}

In this work, we consider the problem of distributed computing of functions of structured sources, focusing on the classical setting of two correlated sources and one user that seeks the outcome of the function while benefiting from low-rate side information provided by a helper node. 
Focusing on the case where the sources are jointly distributed according to a very general mixture model, 
we here provide an achievable coding scheme that manages to substantially reduce the communication cost of distributed computing by exploiting the nature of the joint distribution of the sources, the side information, as well as 
the symmetry enjoyed by the desired functions. Our scheme --- which can readily apply in a variety of real-life scenarios including learning, combinatorics, and graph neural network applications --- is here shown to provide substantial reductions in the communication costs, while simultaneously providing computational savings by reducing the exponential complexity of joint decoding techniques to a complexity that is merely linear. 
\end{abstract}


\section{Introduction}
\label{sec:intro}

The past few years have seen a rising need to speed up computationally-intensive tasks, as well as have witnessed an ever increasing necessity for new parallel processing techniques that efficiently distribute computations across groups of servers. This necessary transition to distributed computing though, has also introduced a variety of challenges that involve accuracy \cite{wang2021price}, computing scalability \cite{soleymani2021analog}, and straggler mitigation \cite{li2021flexible}. Key among these challenges comes in the form of the crippling communication bottleneck of distributed computing, brought about by the astronomical communication costs often required for implementing computing in a distributed manner. This bottleneck is central to our work here.

Following the seminal work in \cite{yao1979some} on the communication complexity of distributed computing, several 
works have sought to minimize the associated communication cost, with some of these works including the recent breakthroughs in distributed linearly separable computation 
\cite{khalesi2022multi}, distributed matrix multiplication
~\cite{jia2021capacity}, 
and others that extended the seminal work in \cite{SlepWolf1973} 
to 
function computation, 
including~
\cite{Kor73} and \cite{feizi2014network}, that consider 
function computation over networks, as well as 
\cite{doshi2007source} 
that study functional rate distortion. 
This same communication cost was naturally also affected by the nature of the computed function. With this in mind, 
we studied distributed computation using structured distributions, see e.g., \cite{Malak2022hyperbin}
.


The above motivate us to explore the joint effect of having structure in data and in sources, and to explore how this structure can help us reduce the aforementioned communication bottleneck of distributed computing. Toward this, we here consider a partially distributed computation problem with two sources of jointly distributed data, with one user that seeks the outcome of a function of the sources, and with one low-rate helper that mediates the computation of the function by providing a small amount of side information on the instantaneous matching of the sources. The sources are here modeled by a mixture distribution. From the perspective of Bayesian inference, such models can accurately capture the behavior of data distributed according to a mixing distribution.

Our aim will be to exploit the structure of the data and of the desired functions, in order to reduce communication (as well as computational/decoding) costs. To do that, while capturing the structure of the partially distributed sources, we will seek to decompose the joint source distribution into a convex combination of integral matchings, and we will do so by exploiting the well-known {\emph{Birkhoff-von Neumann Theorem}} in \cite{caron1996nonsquare}, as well as {\emph{Sinkhorn’s Theorem}} in \cite{sinkhorn1964relationship}.  
We will here focus on the case where the nonmatched distributions correspond to low-probability events, and the case where the side information captures this matching behavior. 

We next describe generalizations of the Birkhoff-von Neumann statistical multiplexing approaches that have been successful in signal processing, wireless and networking applications, including switching theory, e.g., the study of the rate region of flows to compute a switch schedule using a graph-theoretic formulation \cite{4285330}, and best-effort switching services \cite{chang2000birkhoff}. 
Fast converging algorithms for delay-sensitive applications with sparse switching configurations by leveraging Frank-Wolfe methods 
were also studied \cite{valls2021birkhoff}. In \cite{dimakis2010gossip}, 
the Birkhoff-von Neumann theorem was used for the analysis of gossip algorithms. There exist applications in multi-sensor data in signal processing \cite{uccar2019partitioning}, matrix analysis \cite{tropp2022acm}, and stochastic matrix optimization for power amplifiers \cite{safa2022low}. 
For instance, in latency-constrained applications, it is possible to precompute an approximated Birkhoff-von Neumann decomposition offline and then select a permutation matrix at random with probability proportional to its coefficient \cite{hoyos2020approximate}.


All the above will allow us to propose an achievable coding scheme that captures the aforementioned structure in data and functions as well as exploits the side information, 
to reduce the overall communication cost. It is worth noting that the proposed scheme is different than the existing zero-error distributed coding schemes 
\cite{witsenhausen1976zero} whose operating rates are often limited by the network topology. The proposed scheme, as we will see, can attain gains of approximately 40\% over distributed schemes that do not exploit the above structure. In addition, the helper-based approach will be shown to dramatically decrease the decoding complexity of joint distributions versus minimum-entropy decoding (which is NP-hard in general) used in Slepian-Wolf source coding \cite{coleman2005towards}. 

The rest of the paper is structured as follows. In Sect.~\ref{sec:model_defns}, we give a primer on 
graph compression for modeling the computation rate region for two sources jointly distributed according to a mixture. 
In Sect.~\ref{sec:problem_statement}, we detail 
the rate region 
with a helper that can extract the 
matching information 
of the sources for (i) perfectly matched, and (ii) maximally matched sources. 
We conclude in Sect.~\ref{sec:examples} with an example to numerically evaluate the complexity of our technique.

{\bf Notation.} For a random variable $X$ with a finite alphabet $\mathcal{X}$, then $P_X$ will denote its probability mass function (PMF). Similarly, for variables $X_1$ and $X_2$, then $P_{X_1,X_2}$ will denote the joint PMF.  
Let the entropy function of a PMF ${\bf p}$ be $h({\bf p})=-\sum\nolimits_i p_i\log p_i$ where the logarithm is in base $2$, $h(p)$ be the binary entropy function with parameter $p$, 
and $H(X)=\mathbb{E}[-\log P_{X}(X)]$ be the Shannon entropy of $X$ drawn from  $P_{X}$. 
We denote by ${\bf X}_1^n=X_{11},X_{12},\dots, X_{1n}\in \mathcal{X}_1^n$ the length $n$ sequence of $X_1$ sampled from an $n$-fold finite alphabet $\mathcal{X}_1^n$. 
We let $[N]=\{1,2,\dots,N\}$, $N\in \mathbb{Z}^+$. 

\section{Model, Definitions, \& Related Existing Results}
\label{sec:model_defns}
We consider the problem of computing functions of two sources of partially distributed data according to a mixture distribution, and one user that aims to recover the function outcome. 
%
%
Our model relies on the availability of the side information that captures the matching information extracted from the mixture, which we will elaborate later on, just below, at the source sites, which we leverage via one helper possibly with a limited rate. 
The user can recover both sources in an error-free manner.
Our main purpose is, for sources and functions with described structures, to characterize an achievable rate region for asymptotically lossless computation.
Given the existence of schemes, e.g., \cite{SlepWolf1973} and \cite{korner1979encode}, 
that aim to recover data or a function 
of data in a similar 
manner and have higher sum rates than our scheme, the two advantages of our model over these pertinent ones are (i) a tighter rate region, which is 
critical in modern applications because of symmetries, and (ii) complexity of decoding, not only alleviating the distributed computing systems from performing calculations beyond what the user seeks to compute 
but also eliminating the prohibitive complexity of minimum-entropy decoding.

In what follows, we describe some of the fundamental ingredients that we will use, first with respect to some basic principles behind the graph entropy approach that will allow us to capture some of the interdependencies between the sources and the computed function, and then we will discuss the mixture model that captures the joint source distribution.

\subsection{Representation of Source Characteristic Graphs} 
\label{sec:CharacteristicGraphs}

In this section, we review the fundamental limits of asymptotically lossless compression for computation, which can be realized using the concept of \emph{characteristic graphs}. 
Source one builds a characteristic graph $G_{X_1}=(V_{X_1},E_{X_1})$ \cite{Kor73} for computing $f(X_1,X_2)$ to distinguish the source outcomes that yield a different output for any value of $X_2$, and similarly for source two. Vertices are the sample values, i.e., $V_{X_1}=\mathcal{X}_1$, and the edges are determined as follows. Given two arbitrary vertices $(x_1^{k_1}, x_1^{k_2}) \in \mathcal{X}_1^2$ in $G_{X_1}$ such that $k_1\neq k_2$, if $\exists$ a vertex $x_{2}^{l}\in \mathcal{X}_{2}$ in $G_{X_{2}}$ such that $P_{X_1,X_2}(x_1^{k_1},x_2^{l})P_{X_1,X_2}(x_1^{k_2},x_2^{l})>0$ and the function satisfies $f(x_1^{k_1},x_{2}^{l})\neq f(x_1^{k_2},x_{2}^{l})$, 
then $(x_1^{k_1},x_1^{k_2})\in E_{X_1}$. Otherwise, $(x_1^{k_1},x_1^{k_2})\notin E_{X_1}$.

We let $c_{G_{X_1}}(X_1)$ be a valid coloring of $G_{X_1}$, where a valid coloring 
is such that any two vertices of $G_{X_1}$ that share an edge are assigned distinct colors. As one would expect, the characteristic graphs built by the sources 
are correlated.

For simultaneous encoding of multiple instances of a source, source $m\in [2]$ similarly builds the $n$-th power of $G_{X_m}$, i.e., $G_{{\bf X}_m}^n$. We note that $G_{{\bf X}_m}^n=(V_{X_m}^n,E_{X_m}^n)$ is an OR graph such that $V_{X_m}^n=\mathcal{X}_m^n$ and if $(x_{mi}^{k_1},x_{mi}^{k_2})\in E_{X_m}$ for some coordinate $i\in [n]$, then $({\bf x}_{m}^{k_1},{\bf x}_{m}^{k_2})\in E_{X_m}^n$. 

The characteristic graph entropy of $X_1$ is given by \cite{Kor73} $$H_{G_{X_1}}(X_1)=\lim_{n\to\infty}\, \min\limits_{c_{G_{{\bf X}_1}^n}}\,\frac{1}{n} H(c_{G_{{\bf X}_1}^n}({\bf X}_1))\ ,$$ 
where the minimization is over the set of all valid colorings $c_{G_{{\bf X}_1}^n}({\bf X}_1)$ of $G_{{\bf X}_1}^n$. 
%
Similarly, conditional graph entropy \cite{OR01} and joint graph entropy \cite{feizi2014network} can be determined.

\subsection{Mixture Distribution in the Discrete Parameter Domain}
\label{sec:mixture}

As we outlined above, the two sources are partially distributed based on a mixture distribution. 
Generally, such mixture distribution results from assuming that a random variable $Y$ is distributed according to some parametrized distribution $P_{Y|\Theta}$ with an unknown parameter $\Theta$  with 
the  latent distribution $P_{\Theta}$ \cite{rover2017discrete}. 
In this context, the unconditional distribution $P_Y$ results from marginalizing over $P_{\Theta}$: 
\begin{align}
\label{mixture_distribution}
P_Y(y)= \sum\limits_{\theta} P_{\Theta}(\theta )\,P_{Y|\Theta}(y|\theta )  = \sum\limits_{l\in[L]} q_l\,P_{Y|\Theta}(y|\theta_l )\ ,
\end{align}
where $\Theta\sim (q_1,\dots,q_L)$ is distributed over $ (\theta_1,\dots,\theta_L)$. 

Our interest in mixture distributions stems from two main properties. The first has to do with the fact that such mixture distributions span a broad range of scenarios that are of particular interest in modern science.  
Mixture distributions arise in Bayesian inference where a hypothesis is updated as more information becomes available. 
The second aspect relates to the fact that albeit broad, this mixture structure is endowed with properties that are easier to analyze and expand. One such property relates to the renowned Birkhoff's algorithm. 


Birkhoff's algorithm is a greedy algorithm that receives as input a bistochastic matrix 
$W=(w_{ij})\in\mathbb{R}_{\geq 0}^{N\times N}$ (meaning that each of whose rows and columns sums to $1$), and returns 
a Birkhoff-von Neumann decomposition of $W$, instantiated by the Birkhoff–von Neumann Theorem \cite{caron1996nonsquare}, where $W$ is 
a sum of permutation matrices with non-negative weights \cite{aziz2020simultaneously}. 
The run-time complexity of Birkhoff's algorithm is $O(N^2)$.  
This algorithm is useful in our particular case with partially distributed sources, in providing a decomposition for the mixture distribution, which we will detail in Sect. \ref{sec:representation_matchings}.

Birkhoff's algorithm has applications in fair random assignment,  alleviates the problem of solving challenging linear systems with unstructured and indefinite coefficient matrices, and
improves efficiency and scalability of parallel computing 
\cite{benzi2000preconditioning}. 
It also helps us in our goal to derive the rate region in Sect. \ref{sec:problem_statement} for distributed computing of 
functions.

\section{Problem Statement and Results}
\label{sec:problem_statement}

We here consider a pair of partially distributed sources $X_1,X_2$ that accept a joint PMF $P_{X_1,X_2}(x_1,x_2)>0$ for all $(x_1,x_2)$. 
Our goal is to derive the rate region for the distributed computation of $f(X_1,X_2)$ 
exploiting 
the fact that $(X_1,X_2)$ is distributed according to a mixture PMF (cf.~\ref{mixture_distribution}). Such mixture assumption implies that the sources are matched with a high probability. Equivalently, we can see that for $\mathcal{B}_{X_1,X_2}=(\mathcal{X}_1,\mathcal{X}_2,E)$ being the bipartite graph representation of $P_{X_1,X_2}$ --- with vertex sets $\mathcal{X}_1$ and $\mathcal{X}_2$ representing the individual source outcomes, and edge set $E$ whose weights capture $P_{X_1,X_2}$ --- then the vertices of $\mathcal{B}_{X_1,X_2}$ are non-matched with a vanishing probability. 
To derive the sought rate region, we next detail how to represent and then exploit this structural decomposition of mixture distributions.

\subsection{Representation of Matchings}
\label{sec:representation_matchings}

To now explore and exploit the implications of having integral matchings, let us consider $X_2$. 
Let $X_2=\pi(X_1,\Theta)$ such that the permutation index $l\in [L]$ leads to a matching configuration described by the mixture parameter $\Theta=\theta_l$, as introduced in (\ref{mixture_distribution}). 
We let ${\bf q}=(q _{1},\ldots ,q _{L})\in\mathbb{R}^L_{>0}$ where $q_l=P_{\Theta}(\theta_l)$ such that $\sum _{l\in[L]}q _{l}=1$, for an integer $L\in O(N^2)$  \cite{caron1996nonsquare}, and a set of $N\times N$ permutation matrices $\{B_l,\, l\in[L]\}$.  
Let $\pi_l: \mathcal{X}_1\to \mathcal{X}_1$ be a one-to-one and onto function that permutes the elements of $\mathcal{X}_1$ to provide a perfect matching where every vertex is adjacent to exactly one edge 
in $\mathcal{B}_{X_1,X_2}$, which simply means that $X_2=\pi_l(X_1)$. 
Given the perfect integral matching structure in $B_l$ given $l\in [L]$ and the value of $X_1$, say $X_1=x_{1}^k$, the source $X_2$ takes the value $\pi_l(x_{1}^k)$ and the function $f(X_1,X_2)$ in takes the form 
\begin{align}
\label{general_function_mixture}
f(x_{1}^k,\pi_l(x_{1}^k)) \ ,\quad \forall k\in [N]\ ,
\end{align}
with probability $q_l \mathbb{P}(X_1=x_{1}^k\, \vert \theta_l)$. 
We note that the outcome of $f(X_1,X_2)$ is completely determined by $B_l$ and $X_1$.

We next state our lemma, which allows a standard form for any joint PMF with positive entries.

\begin{lem}
\label{lemma:weighted_sum}
Provided that $P_{X_1,X_2}(x_1,x_2)>0$ for all $(x_1,x_2)$, the joint PMF $P_{X_1,X_2}$ can be written as a weighted sum of generalized permutation matrices:
\begin{align}
\label{joint_distribution_mixture}
P_{X_1,X_2}(x_1,x_2)=\sum\limits_{l\in[L]}q_{l}\,P_{X_1,X_2\vert \Theta}(x_1,x_2\vert \theta_l)\ .
\end{align}
\end{lem}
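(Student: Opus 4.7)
The plan is to combine Sinkhorn's theorem \cite{sinkhorn1964relationship} with the Birkhoff--von Neumann decomposition recalled in Sect.~\ref{sec:mixture}. First I would view $P_{X_1,X_2}$ as an $N \times N$ matrix $A$ with strictly positive entries (guaranteed by the hypothesis that $P_{X_1,X_2}(x_1,x_2)>0$ for all $(x_1,x_2)$). Sinkhorn's theorem then yields diagonal matrices $D_1, D_2$ with strictly positive diagonals such that $W = D_1 A D_2$ is doubly stochastic. Applying Birkhoff--von Neumann to $W$ gives $W = \sum_{l \in [L]} \tilde{q}_l B_l$, where the $B_l$ are $N \times N$ permutation matrices, $\tilde{q}_l > 0$, $\sum_{l\in[L]} \tilde{q}_l = 1$, and $L \in O(N^2)$.

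Inverting the Sinkhorn scaling yields
$$A = D_1^{-1} W D_2^{-1} = \sum_{l \in [L]} \tilde{q}_l\, M_l, \qquad M_l := D_1^{-1} B_l D_2^{-1}.$$
Each $M_l$ is non-negative and, since both $D_1^{-1}$ and $D_2^{-1}$ have strictly positive diagonals, is supported on exactly the permutation pattern of $B_l$; hence $M_l$ is a generalized permutation matrix. To cast this into the mixture form stated in the lemma, I would renormalize: let $s_l = \sum_{x_1,x_2} M_l(x_1,x_2)$, set $P_{X_1,X_2\vert\Theta}(x_1,x_2\vert\theta_l) := M_l(x_1,x_2)/s_l$, and define $q_l := \tilde{q}_l\, s_l$. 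By construction each $P_{X_1,X_2\vert\Theta}(\cdot,\cdot\vert\theta_l)$ is a valid PMF whose support is a perfect matching in $\mathcal{B}_{X_1,X_2}$, and summing all entries of the displayed decomposition (whose total mass equals $1$) gives $\sum_{l\in[L]} q_l = 1$, so $(q_l)_{l\in[L]}$ is a valid prior on $\Theta$. Substitution then yields (\ref{joint_distribution_mixture}).

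The step I expect to be most delicate is the applicability of the classical Sinkhorn and Birkhoff--von Neumann theorems, both of which presuppose that $A$ is square, i.e.,~$|\mathcal{X}_1|=|\mathcal{X}_2|=N$. When the source alphabets have unequal sizes, one would instead invoke the rectangular generalizations (e.g.,~the nonsquare Birkhoff--von Neumann statement in \cite{caron1996nonsquare} together with a rectangular Sinkhorn scaling), after which the extremal summands $B_l$ become integral matchings rather than true permutations; the same renormalization argument then carries through with $\Theta$ indexing the $L$ partial matchings. Everything else in the argument --- positivity of $M_l$ on the support of $B_l$, conservation of total mass, and the identification of $\Theta$ with the latent mixture parameter --- reduces to bookkeeping.
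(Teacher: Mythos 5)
Your proposal follows essentially the same route as the paper's proof: apply Sinkhorn's theorem to obtain a doubly stochastic $W=D_1P_{X_1,X_2}D_2$, decompose $W$ via Birkhoff--von Neumann into $\sum_{l}\tilde{q}_l B_l$, and invert the scaling so that each summand $D_1^{-1}B_lD_2^{-1}$ is a generalized permutation matrix. Your added renormalization step (dividing each $M_l$ by its total mass $s_l$ and absorbing $s_l$ into $q_l$ so that each conditional is a genuine PMF and $\sum_l q_l=1$) is a small but worthwhile tightening that the paper's own proof leaves implicit, as is your remark on the square-alphabet assumption.
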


\begin{proof}
Under our assumption that $P_{X_1,X_2}(x_1,x_2)>0$ for all $(x_1,x_2)$, then we first employ Sinkhorn’s Theorem \cite{sinkhorn1964relationship} to note that $P_{X_1,X_2}$ can be mapped into a doubly stochastic $W=D_1 P_{X_1,X_2} D_2$ for unique diagonal matrices $D_1=(d_{1,ij})$ and $D_2=(d_{2,ij})$ in $ \mathbb{R}_{>0}^{N\times N}$. 
Then we proceed to employ the famous Birkhoff–von Neumann Theorem \cite{caron1996nonsquare} to note that $W=\sum\nolimits_{l\in[L]}q _{l}B_{l}$, and $P_{X_1,X_2}$ is a weighted sum of generalized permutation matrices, thus taking the 
form in (\ref{joint_distribution_mixture}), 
where $P_{X_1,X_2\vert \Theta}(\cdot,\cdot \vert \theta_l)=D_1^{-1}B_{l}D_2^{-1}$ is a generalized permutation matrix obtained from $\{B_l,\, l\in[L]\}$ where $L\in O(N^2)$, and the matrices $D_1^{-1}$ and $D_2^{-1}$ are such that $D_1^{-1}=(d^{-1}_{1,ij})$ and $D_2^{-1}=(d^{-1}_{2,ij})$ in $ \mathbb{R}_{>0}^{N\times N}$, and the mapping $\pi_l: \mathcal{X}_1\to \mathcal{X}_1$ from $X_1$ to $X_2$ captures $P_{X_1,X_2\vert \Theta}(x_1,x_2\vert \theta_l)$.
\end{proof}


\subsection{Identifying Structures of $\mathcal{B}_{X_1,X_2}$ via Matchings}
\label{sec:identification_structures}

There exist techniques to exploit the combinatorial structure of $\mathcal{B}_{X_1,X_2}$ when it has multiple connected components, see e.g., the G{\'a}cs-K{\"o}rner-Witsenhausen common information (GKW-CI)  \cite{gacs1973common
}. 
However, if $\mathcal{B}_{X_1,X_2}$ has only one bipartition, GKW-CI cannot be extracted, i.e., GKW-CI is zero. 

A helper-based scheme can still provide a low-complexity distributed coding technique even when $\mathcal{B}_{X_1,X_2}$ has one bipartition, despite being sub-optimal in terms of its operating rate.  
We are motivated to contemplate a matching-based helper as maximum matching in 
graphs can be determined using polynomial time algorithms   \cite{edmonds1965paths}, which can be exploited to facilitate the Birkhoff-von Neumann decomposition. 
%
When $P_{X_1,X_2}$ is a mixture distribution satisfying (\ref{joint_distribution_mixture}) that accepts a matching-based decomposition, a helper can be used to distinguish the matched and non-matched vertices of $\mathcal{B}_{X_1,X_2}$. 
%
In the following, we devise a helper-based model to extract the matching information from $\mathcal{B}_{X_1,X_2}$, and denote this helper variable by $\Kmatch$. 
For the described partial distributed setting, Theorem \ref{RateRegion_matching} provides the rate region for computing  (\ref{general_function_mixture}). 

\begin{theo}\label{RateRegion_matching}
({\bf A matching-based computation sum rate.}) 
For computing $f(X_1,X_2)$, where $P_{X_1,X_2}$ is given by the mixture PMF in (\ref{joint_distribution_mixture}), with a helper that extracts the perfect matching between the sources, there exists a low-complexity zero-error encoding and decoding of $X_1$ and $\Theta$ that operates at rates
\begin{align}
R_{H}\geq H(\Kmatch)=H(\Theta),\quad
R_1\geq H_{G_{X_1}}(X_1\vert \Theta)\ ,\nonumber    
\end{align}
yielding the following total rate to compute $f(X_1,X_2)$:
\begin{align}
R_M \geq H(\Kmatch)+H_{G_{X_1}}(X_1\vert \Theta) \ ,
\end{align}
where $H_{G_{X_1}}(X_1\vert \Theta)$ denotes the conditional characteristic graph entropy of $X_1$ for computing $f(X_1,X_2)$ given $\Theta$.  
\end{theo}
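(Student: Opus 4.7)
The plan is to build the achievability scheme in two decoupled layers, one handled by the helper and one by source $X_1$, and then to argue that each layer is both asymptotically rate-optimal for what it transmits and computationally cheap. The central structural fact I would invoke is that, by Lemma \ref{lemma:weighted_sum}, conditioned on $\Theta = \theta_l$ the pair $(X_1, X_2)$ lies on a perfect integral matching $\pi_l$, so that $X_2 = \pi_l(X_1)$ almost surely under $P_{X_1,X_2 \mid \Theta}(\cdot,\cdot \mid \theta_l)$. Consequently the function reduces to $f(X_1, \pi_\Theta(X_1))$, i.e.\ a function of $X_1$ alone once $\Theta$ is revealed.

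First, I would let the helper observe the matching realization $\Kmatch = \Theta \in [L]$ and compress it with a standard Shannon (or Huffman) code on $n$ i.i.d.\ copies, achieving any rate $R_H > H(\Theta)$ with vanishing error. Because $L \in O(N^2)$, this encoding/decoding runs in time polynomial in $N$. Second, at source $X_1$ I would construct, for each realization $\theta_l$, the conditional characteristic graph $G_{X_1 \mid \theta_l}$ whose edges connect $x_1^{k_1}, x_1^{k_2}$ exactly when $f(x_1^{k_1}, \pi_l(x_1^{k_1})) \neq f(x_1^{k_2}, \pi_l(x_1^{k_2}))$ with positive conditional probability, then take its $n$-th OR-power. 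By the Körner graph entropy / Orlitsky--Roche conditional-graph-entropy construction (applied with $\Theta$ playing the role of the conditioning variable, which is losslessly available to both encoder and decoder after the helper layer), $X_1$ can transmit a valid coloring at any rate $R_1 > H_{G_{X_1}}(X_1 \mid \Theta)$ with vanishing error.

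The decoder then executes a two-step procedure: it first decodes $\Theta$ from the helper's stream, uses $\Theta$ to select the appropriate coloring table $c_{G_{X_1 \mid \Theta}}$, recovers the color $c_{G_{X_1 \mid \Theta}}(X_1)$, and finally outputs $f(X_1, \pi_\Theta(X_1))$. Correctness follows because, by construction of the characteristic graph, two source outcomes that would induce different function values are assigned distinct colors, so the color together with $\Theta$ uniquely pins down the function value. The total achievable rate is thus $R_M = R_H + R_1$, which can be pushed arbitrarily close to $H(\Theta) + H_{G_{X_1}}(X_1 \mid \Theta)$. Note that the decoder never performs joint typicality decoding over $(X_1, X_2)$: it decodes $\Theta$ independently and then applies a table lookup in $X_1$, which is why the decoding complexity collapses from the exponential complexity of minimum-entropy Slepian--Wolf decoding to one linear in the block length.

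The main obstacle I would expect lies not in the achievability itself, which is a fairly direct composition of known single-source and helper-aided coding results, but in rigorously justifying that conditioning the characteristic graph on $\Theta$ is legitimate and yields the claimed $H_{G_{X_1}}(X_1\mid\Theta)$ operational rate. Concretely, one needs to verify that $\Theta$, being a deterministic function of the realized matching in the mixture, can be made jointly typical with $(X_1, X_2)$ so that standard random-coloring/binning arguments carry through on the $\Theta$-conditional subgraphs; and one must check that the decoupling in the OR-power construction is not broken by the mixture structure. Handling the zero-probability edges of $\mathcal{B}_{X_1,X_2}$ outside the matching support (nominally excluded by the hypothesis that non-matched events have vanishing probability) is the other delicate point, and I would address it by a standard typicality truncation to discard atypical sequences without affecting the asymptotic rate.
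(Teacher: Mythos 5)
Your proposal follows essentially the same route as the paper's proof: the helper conveys the matching index $\Theta$ at rate $H(\Theta)$, after which $X_2=\pi_\Theta(X_1)$ makes a single-source transmission sufficient, and source one achieves $H_{G_{X_1}}(X_1\vert\Theta)$ via the Orlitsky--Roche conditional characteristic-graph construction on the $\Theta$-conditional subgraphs. Your version is in fact more explicit than the paper's about the decoder's two-step table-lookup procedure and the typicality caveats, but the decomposition and the key ingredients (Lemma~\ref{lemma:weighted_sum} and conditional graph entropy) are identical.
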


\begin{proof}
In the partially distributed setting, to be able to compute $f({\bf X}_1^n,{\bf X}_2^n)$ accurately, the perfect matching variable ${\bm \Theta}^n$ where $\Theta_i\sim {\bf q}$ for all $i\in [n]$ should be made available to the user (or via side information), which requires an asymptotic rate $R_H\geq H(K_M)=H(\Theta)$. Knowledge of ${\bm \Theta}^n$ determines the jointly typical coloring sequence pairs $(c_{G_{X_1}}({\bf X}_1^n),\,c_{G_{X_2}}({\bf X}_2^n))$ to be compressed asymptotically. 

Given that the perfect matching variable $\Theta\sim {\bf q}$ is known at the user, 
such that $X_2=\pi(X_1,\Theta)$, it is sufficient for the user if only one source transmits. Let us assume source one is selected. 
Source one builds the characteristic graph $G_{X_1}$ to compute $f({\bf X}_1^n,{\bf X}_2^n)=f({\bf X}_1^n,\pi({\bf X}_1^n,{\bf\Theta}^n))$ given ${\bf\Theta}^n$. This requires, following the notion of the conditional graph entropy, as detailed in \cite{OR01}, an asymptotic rate of $H_{G_{X_1}}(X_1\vert \Theta)$.

The rate needed from source one to compute $f(X_1,X_2)$ is 
\begin{align}
\label{rate_one_source_permutation_invariant}
R_1
\geq\sum\limits_{l\in[L]}  q_l H(f(X_1,X_2)\,\vert \, X_2=\pi_l(X_1)) \ ,
\end{align} 
enabling the partial distributed computation of $f(X_1,X_2)$ at an asymptotic rate upper bounded by $H(f(X_1,X_2)\, , \,  \Theta)$. 
\end{proof}

The rate region in Theorem \ref{RateRegion_matching} is encompassed by that of optimal distributed functional compression given in \cite{feizi2014network}. 
On the other hand, when $P_{X_1,X_2}$ accepts the decomposition in (\ref{joint_distribution_mixture}), a helper-based model to extract the matching information has advantages over Slepian-Wolf coding. Not only it eliminates the complexity associated with joint typicality decoding \cite{cover2012elements} with exponential complexity, but also provides an almost lossless compression asymptotically, given by Theorem \ref{RateRegion_matching}. 

We next impose additional structure on the distributed sources by exploiting the maximal coupling construction. 

\subsection{Extracting Matched versus Non-Matched Vertices}
\label{sec:matching_nonmatching}

In this section, we capture the matchings between the pair of partially distributed sources $X_1$ and $X_2$, which are maximally coupled. 
A maximal coupling between 
a pair $(X_1,X_2)$ 
maximizes $\mathbb{P}(X_1 = X_2)$ subject to the marginal PMFs $X_1\sim {\bf p}_1
$ and $X_2\sim {\bf p}_2
$. 
%
Let $C({\bf p}_1,\,{\bf p}_2)$ be the set of all joint PMFs of $X_1\sim {\bf p}_1$ and $X_2\sim {\bf p}_2$. 
Elements $M=[m_{ij}]~\in\mathbb{R}_{\geq 0}^{|\mathcal{X}|\times |\mathcal{X}|}$ of $C({\bf p}_1,\,{\bf p}_2)$ are couplings of ${\bf p}_1$ and ${\bf p}_2$:
\begin{align}
C({\bf p}_1,\,{\bf p}_2)\triangleq\Big\{
\big[m_{ij}\,:\sum_{j\in |\mathcal{X}|} m_{ij}=p_{1i},\,\,
\sum_{i\in |\mathcal{X}|} m_{ij}=p_{2j}\big]\Big\} \ .\nonumber
\end{align}

A coupling of $(X_1,\,X_2)$ that maximizes $\mathbb{P}(X_1=X_2)$ 
is called a maximal coupling which is formally stated next.
\begin{lem}\label{maximumcoupling}
Maximal coupling $(X_1, X_2 )$ subject to the marginal distributions $X_1\sim {\bf p}_1$ and $X_2\sim {\bf p}_2$ satisfies 
\begin{align}
\label{MaximalCoupling}
\norm{{\bf p}_1-{\bf p}_2}_{\rm{TV}}= \inf\ [\mathbb{P}(X_1 \neq X_2): \,C({\bf p}_1,\,{\bf p}_2)] \ ,
\end{align}
where the measure $\norm{{\bf p}_1-{\bf p}_2}_{\rm{TV}}$ is the total variation distance between the PMFs 
of $X_1$ and $X_2$.
\end{lem}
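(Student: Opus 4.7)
The plan is to prove the identity in two directions: first establish that every coupling satisfies $\mathbb{P}(X_1\neq X_2)\geq \norm{{\bf p}_1-{\bf p}_2}_{\rm{TV}}$, and then exhibit a specific coupling that attains equality, which together pin down the infimum.

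For the lower bound, I would fix an arbitrary element $M\in C({\bf p}_1,{\bf p}_2)$ and, for any subset $A\subseteq \mathcal{X}$, decompose
\begin{align}
\mathbb{P}(X_1\in A)-\mathbb{P}(X_2\in A) = \mathbb{P}(X_1\in A,\,X_2\notin A) - \mathbb{P}(X_1\notin A,\,X_2\in A)\ ,\nonumber
\end{align}
whose absolute value is at most $\mathbb{P}(X_1\neq X_2)$ because both terms on the right are bounded by this probability. Taking the supremum over $A$ and invoking the standard identity $\norm{{\bf p}_1-{\bf p}_2}_{\rm TV}=\sup_{A}|\mathbb{P}(X_1\in A)-\mathbb{P}(X_2\in A)|$ yields $\mathbb{P}(X_1\neq X_2)\geq \norm{{\bf p}_1-{\bf p}_2}_{\rm TV}$ for every $M$, and hence for the infimum over $C({\bf p}_1,{\bf p}_2)$.

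For the matching upper bound, I would explicitly construct a coupling achieving equality. Let $\gamma = \sum_{x\in\mathcal{X}}\min(p_{1x},p_{2x})=1-\norm{{\bf p}_1-{\bf p}_2}_{\rm TV}$, and define the sub-PMFs $r(x)=\min(p_{1x},p_{2x})/\gamma$, $s_1(x)=(p_{1x}-\min(p_{1x},p_{2x}))/(1-\gamma)$, and $s_2(x)=(p_{2x}-\min(p_{1x},p_{2x}))/(1-\gamma)$ (with the convention that only the well-defined pieces are used when $\gamma\in\{0,1\}$). Draw a Bernoulli$(\gamma)$ variable $U$ independent of everything else: on $\{U=1\}$ sample $Z\sim r$ and set $X_1=X_2=Z$; on $\{U=0\}$ sample $X_1\sim s_1$ and $X_2\sim s_2$ independently. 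A direct verification shows that the marginals are ${\bf p}_1$ and ${\bf p}_2$, so this construction lies in $C({\bf p}_1,{\bf p}_2)$, and by construction $\mathbb{P}(X_1=X_2)\geq \gamma$, giving $\mathbb{P}(X_1\neq X_2)\leq \norm{{\bf p}_1-{\bf p}_2}_{\rm TV}$.

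Combining both bounds gives (\ref{MaximalCoupling}) and shows that the infimum is attained. The main obstacle I anticipate is verifying that the supports $s_1$ and $s_2$ are disjoint (so that $X_1\neq X_2$ with probability one on the event $\{U=0\}$, making the inequality $\mathbb{P}(X_1=X_2)\geq \gamma$ actually an equality), and carefully handling the boundary cases $\gamma=0$ and $\gamma=1$; everything else is a routine bookkeeping of sub-PMFs.
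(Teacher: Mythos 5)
Your proposal is correct and follows essentially the same route as the paper: the achievability construction you give (mixture weight $\gamma=1-\norm{{\bf p}_1-{\bf p}_2}_{\rm TV}$, common part $r$, and disjointly supported residuals $s_1,s_2$) is exactly the $(U,T,V,W)$ construction in (\ref{Coupling_distributions})--(\ref{maximally_coupled_discrete_RVs}) that the paper invokes from Grimmett--Stirzaker. The only difference is that you spell out the converse (coupling inequality) direction explicitly, which the paper omits by deferring to the cited reference; your argument for it is sound.
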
 

From Lemma \ref{maximumcoupling}, if the sources $X_1$ and $X_2$ are maximally matched, the total variation distance between them is minimum.

Maximal coupling of the pair $(X_1,X_2)$ becomes relevant when the discrepancy between ${\bf p}_1$ and ${\bf p}_2$ is bounded above, e.g., the PMFs coincide 
in the first two decimal points. Intuitively, this coupling 
can be exploited to further reduce the communication cost for distributed computing.

Recall that Theorem \ref{RateRegion_matching} derives a schedule-based lower bound on the sum rate given by Lemma \ref{lemma:weighted_sum}, irrespective of the coupling of $(X_1,X_2)$. 
Next, Theorem \ref{RateRegionPermutationInvariant_maximal_coupling} 
provides a lower bound on the sum rate by assuming an additional structural correlation between $X_1$ and $X_2$ through their maximal coupling.

\begin{figure}[t!]
\centering
\includegraphics[width=0.35\columnwidth]{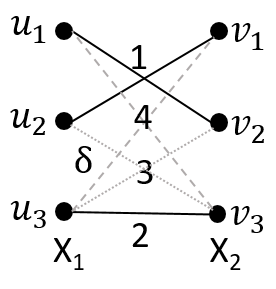}
\caption{Bipartite graph for Example \ref{SymmetricHighError} where the function outcomes are indicated on the edges, the bipartite vertex sets denote the outcomes $\mathcal{X}_1$ and $\mathcal{X}_2$, respectively, and each dotted or dashed edge has a probability $\delta$.}
\label{BipartiteGraphs_HighDelta}
\end{figure}

\begin{theo}\label{RateRegionPermutationInvariant_maximal_coupling}
({\bf A maximally-coupled computation sum rate.}) 
Assume that the sources $X_1$ and $X_2$ are maximally coupled. 
Then, there exists a low-complexity zero-error encoding and decoding of $f(X_1,X_2)$ and $\Theta$ with a helper that extracts the perfect matching between the sources and operates at rates
\begin{align}
R_H+\sum\limits_{m\in[2]}R_m\geq H(\Theta)+\sum\limits_{l\in [L]}q_l \big(h(\delta_l)+\delta_l^c H(T_l)\nonumber
\end{align}
\begin{align}
\label{rate_maximal_coupling}
+\delta_l [H(V_l)+H(W_l)]\big) \ ,
\end{align}
where $V_l$, $W_l$, and $T_l$ are independent integer-valued 
variables with distributions given according to (\ref{Coupling_distributions}) with $\delta$ is substituted with $\delta_l=\mathbb{P}(\pi_l(X_1)\neq X_2)$ for given $l\in [L]$, and $\delta_l^c=1-\delta_l$.
\end{theo}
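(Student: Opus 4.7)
The plan is to build on Theorem~\ref{RateRegion_matching} and refine only the source-side contribution by exploiting the extra structure granted by maximal coupling. The helper still transmits the matching index $\Theta$ at the same rate $H(\Theta)$ as before, so the first term on the right-hand side of~(\ref{rate_maximal_coupling}) is inherited directly. What remains is to bound $R_1 + R_2$ conditioned on $\Theta$, and then average against the mixing weights $q_l$.

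Fix $l\in[L]$ and condition on $\Theta=\theta_l$. By Lemma~\ref{maximumcoupling} applied to the conditional marginals of $X_1$ and $\pi_l^{-1}(X_2)$, the event $\{\pi_l(X_1)\neq X_2\}$ has probability exactly $\delta_l$. I would introduce the binary indicator $J_l=\mathbbm{1}\{\pi_l(X_1)\neq X_2\}$ and ask the encoder to convey it losslessly, at conditional cost $h(\delta_l)$ bits per symbol. On the matched event $\{J_l=0\}$, the two sources carry the same value whose normalized distribution is precisely the matched component $T_l$ of the coupling given in~(\ref{Coupling_distributions}); since only one of the two sources needs to transmit this shared value, the conditional contribution is $\delta_l^c H(T_l)$. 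On the non-matched event $\{J_l=1\}$, the standard maximal coupling construction forces $X_1$ and $\pi_l^{-1}(X_2)$ to be drawn \emph{independently} from the residual marginals $V_l$ and $W_l$ of~(\ref{Coupling_distributions}), so that each source must separately encode its own value, contributing $\delta_l[H(V_l)+H(W_l)]$.

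Adding these three contributions for each $l$, averaging against $q_l$, and prepending the helper rate $H(\Theta)$ yields the right-hand side of~(\ref{rate_maximal_coupling}). The zero-error and low-complexity properties transfer verbatim from Theorem~\ref{RateRegion_matching}: once $(\Theta,J_l)$ is available, the typical set partitions into a matched slice requiring a single-source typical code for $T_l$ and a non-matched slice requiring two independent single-source typical codes for $V_l$ and $W_l$, thereby replacing joint-typicality/minimum-entropy decoding by three simple one-shot stages.

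The main obstacle is to give a clean justification that, on the non-matched event $\{J_l=1\}$, the residual marginals $V_l$ and $W_l$ are really independent --- this is exactly what makes the non-matched cost additive rather than joint, and it is what permits the $H(V_l)+H(W_l)$ form in the bound rather than a conditional joint entropy. Independence follows from the textbook maximal coupling recipe (removing the common mass $\min(p_{1i},p_{2i})$ from the joint leaves disjoint supports, so the normalized residuals factor), but explicitly linking this argument to~(\ref{Coupling_distributions}) and to the bipartite graph $\mathcal{B}_{X_1,X_2}$ restricted to $\Theta=\theta_l$ is the step that requires the most care.
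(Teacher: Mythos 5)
Your proposal is correct and follows essentially the same route as the paper: both rely on the textbook maximal coupling construction $X_1'=UT+(1-U)V$, $X_2'=UT+(1-U)W$ applied per matching index $l$, split the rate into the Bernoulli indicator cost $h(\delta_l)$, the matched contribution $\delta_l^c H(T_l)$, and the non-matched contribution $\delta_l[H(V_l)+H(W_l)]$, and then average over $q_l$ and add the helper rate $H(\Theta)$. The independence concern you flag at the end is resolved exactly as you suspect and exactly as the paper does it --- the residual masses $[p_{1k}-p_{2k}]^+$ and $[p_{2k}-p_{1k}]^+$ have disjoint supports, so $V_l$ and $W_l$ factor on the non-matched event.
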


\begin{proof}
To prove this result, we provide a construction for maximal coupling from 
\cite[Ch. 4.12]{grimmett2001probability}.  
There exists a pair $X_1'$ and $X_2'$ having the same marginals as $X_1$ and $X_2$ such that $\mathbb{P}(X_1'=X_2')= 1-\norm{{\bf p}_1-{\bf p}_2}_{\rm{TV}}=1-\delta$. 
Let $U\sim$Bern$(1-\delta)$ 
and $V$, $W$, $T$ be independent integer-valued variables with respective masses at $k=1,\dots,|\mathcal{X}|$:
\begin{align}
\label{Coupling_distributions}
\mathbb{P}(T=k)&\triangleq \frac{\min[p_{1k},\,p_{2k}]}{(1-\delta)},\\
\mathbb{P}(V=k)\triangleq \frac{[p_{1k}-p_{2k}]^+}{\delta},&\quad
\mathbb{P}(W=k)\triangleq \frac{[p_{2k}-p_{1k}]^+}{\delta} \ .\nonumber
\end{align}
Then, the random variables $X_1'$ and $X_2'$ defined as
\begin{align}
\label{maximally_coupled_discrete_RVs}
X_1' \triangleq U T + (1 - U) V,  \quad
X_2' \triangleq U T + (1 - U) W
\end{align}
have the required marginals, and are maximally coupled such that $\mathbb{P}(X_1'=X_2')=\mathbb{P}(U=1)=1-\delta$. Furthermore, $X'_1$ and $X'_2$ are independent when $X'_1\neq X'_2$ as their sets are disjoint. 
For the proof, we refer the reader to \cite[Ch. 4.12]{grimmett2001probability}. 

%
Let us consider the possible matchings described by $\Theta\sim {\bf q}$, where the matched vertices are invariant up to 
permutations. For each $l\in[L]$ with probability $q_l=P_{\Theta}(\theta_l)$, we have
\begin{align}
\pi(X_1,\theta_l)=\pi_l(X_1)&= U_l T_l + (1 - U_l) V_l \ ,\nonumber\\
X_2&= U_l T_l + (1-U_l) W_l \ , \quad l\in [L] \ ,
\end{align}
where $U_l\sim$Bern$(1-\delta_l)$, and $V_l$, $W_l$, and $T_l$ are independent integer-valued variables with respective masses at $k\in [|\mathcal{X}|]$ according to (\ref{Coupling_distributions}) where $\delta$ being replaced by $\delta_l$.

A maximal coupling of $X_1$ and $X_2$ results in $$\mathbb{P}(X_2=\pi(X_1,\Theta))=\sum\limits_{l\in[L]} q_l \mathbb{P}(U_l=1) =\sum\limits_{l\in[L]} q_l \cdot \delta_l^c\ ,$$
where $\delta_l^c=1-\delta_l$. 
The parts corresponding to $U_l=1$ versus $U_l=0$  denote the matched and the non-matched components, respectively. This yields the sum rate needed from the helper and both sources to compute $f(X_1,X_2)$, as given in (\ref{rate_maximal_coupling}).
\end{proof}

To contrast our model with the state-of-the-art, we next consider an example. 
Due to limited space, we deferred several 
examples to the extended version of the draft \cite{Malak2022Structured}.

\section{Example with Structured Sources}
\label{sec:examples}
We next study an example where a helper leverages the structure of $\mathcal{B}_{X_1,X_2}$, 
and provides the necessary rate to distinguish the matching information, $K_M$, in $\mathcal{B}_{X_1,X_2}$. 
%
%
Extracting this information alleviates the complexity of distributed computing provided that the non-matched distributions correspond to low-probability events. Given $K_M$, one source (or both) needs to send a refinement to identify the function outcome.

In the following, we denote by $F$ the table of function outcomes, where the coordinates match the coordinates of $P_{X_1,X_2}$ with entries ordered in an increasing fashion. 

\begin{ex}\label{SymmetricHighError}Consider the probability matrix and the table of 
function outcomes given as follows:
\begin{align}
P_{X_1,X_2} = \frac{1}{3} \begin{bmatrix} 0 & 1-\delta & \delta \\ 1-\delta & 0 & \delta \\ \delta & \delta & 1-2\delta\end{bmatrix},\quad F= \begin{bmatrix} X & 1 & 4 \\ 1 & X & 3\\ 4 & 3 & 2\end{bmatrix},  \nonumber 
\end{align}
where $\delta\in (0,0.5)$. The bipartite graph $\mathcal{B}_{X_1,X_2}$ is shown in Fig. \ref{BipartiteGraphs_HighDelta} where the vertices are listed in the presented order.

{\bf Fully distributed coding.} The entropy of the function for given $P_{X_1,X_2}$ and $F$ is $H(f(X_1,X_2)) = h\left(\frac{2-2\delta}{3},\frac{1-2\delta}{3},\frac{2\delta}{3},\frac{2\delta}{3}\right)$. A trivial rate upper bound is $H_{G_{X_1}}(X_1)+H_{G_{X_2}}(X_2) = 2\log 3$. The sum rate required to compute the function in the case of no helper is $H_{G_{X_1}}(X_1)+H_{G_{X_2}}(X_2\vert X_1) = \log(3)+\frac{2}{3} h(\delta)+\frac{1}{3} h(\delta,\delta,1-2\delta)$.   

{\bf Partially distributed coding via extracting matchings.} The helper decomposes $\mathcal{B}_{X_1,X_2}$ into a perfect matching and a non-matched graph. More specifically, $P_{X_1,X_2}$ can be described by the following mixture distribution: 
\begin{align}
P_{X_1,X_2} 
=\graphweight_1 \begin{bmatrix} 0 & \frac{1-\delta}{3-4\delta} & 0 \\ \frac{1-\delta}{3-4\delta} & 0 & 0\\ 0 & 0 & \frac{1-2\delta}{3-4\delta}\end{bmatrix}+\graphweight_2 \begin{bmatrix} 0 & 0 & \frac{1}{4} \\ 0 & 0 & \frac{1}{4}\\ \frac{1}{4} & \frac{1}{4} & 0\end{bmatrix},\nonumber
\end{align}
where $\delta\in(0,0.5)$, the first matrix describes a perfect matching, denoted by $\Kmatch=0$ with probability $\graphweight_1=1-\frac{4\delta}{3}$, and the second matrix describes a low-probability event, denoted by $\Kmatch=1$ with  $\graphweight_2=\frac{4\delta}{3}$. Hence, the rate required from the helper to distinguish between these two is $H(\Kmatch)=h\left(\graphweight_1\right)$. 
Given $\Kmatch=0$, only 1 source needs to transmit, which requires a rate of $h\Big(\frac{2-2\delta}{3-4\delta}\Big)$ to determine the function outcome. 
Given $\Kmatch=1$, source 1, $X_1$, needs 2 colors for $u_1$ and $u_2$ to distinguish the outcomes $4$ and $3$ 
each with a probability $\frac{1}{4}$, and $u_3$, which has a probability $\frac{1}{2}$, does not need to be distinguished from $\{u_1,u_2\}$ (no shared edges between $\{u_1,u_2\}$ and $u_3$). Similarly for $X_2$. Hence, given $\Kmatch=1$, each source is required to send at an asymptotic rate of $h\left(\frac{3}{4}\right)$ bits per use. Hence, the sum rate required to compute $F$ with a helper that exploits the matching information of $\mathcal{B}_{X_1,X_2}$ is 
\begin{multline}
R_M\geq H(\Kmatch)+\mathbb{P}(\Kmatch=0)H_{G_{X_1}}(X_1\vert \Kmatch=0) \nonumber\\ 
+\mathbb{P}(\Kmatch=1)(H_{G_{X_1}}(X_1\vert \Kmatch=1)+(H_{G_{X_2}}(X_2\vert \Kmatch=1))\nonumber\\
=h\left(\graphweight_1\right)+\graphweight_1 h\Big(\frac{1-2\delta}{3-4\delta}\Big)+\graphweight_2\Big(h\left(\frac{3}{4}\right)+h\left(\frac{3}{4}\right)\Big)\ .\nonumber
\end{multline}
\end{ex}

From above, our matching-based approach has the best approximation for small $\delta$, where the gain of our model over the fully distributed coding setting that exploits the structure of the function but not the source is $\% 42$ and the loss versus 
the fundamental limit $H(f(X_1,X_2))$ is at most $\% 26$.

\bibliographystyle{IEEEtran}
\bibliography{references}

\begin{thebibliography}{10}
\providecommand{\url}[1]{#1}
\csname url@samestyle\endcsname
\providecommand{\newblock}{\relax}
\providecommand{\bibinfo}[2]{#2}
\providecommand{\BIBentrySTDinterwordspacing}{\spaceskip=0pt\relax}
\providecommand{\BIBentryALTinterwordstretchfactor}{4}
\providecommand{\BIBentryALTinterwordspacing}{\spaceskip=\fontdimen2\font plus
\BIBentryALTinterwordstretchfactor\fontdimen3\font minus
  \fontdimen4\font\relax}
\providecommand{\BIBforeignlanguage}[2]{{%
\expandafter\ifx\csname l@#1\endcsname\relax
\typeout{** WARNING: IEEEtran.bst: No hyphenation pattern has been}%
\typeout{** loaded for the language `#1'. Using the pattern for}%
\typeout{** the default language instead.}%
\else
\language=\csname l@#1\endcsname
\fi
#2}}
\providecommand{\BIBdecl}{\relax}
\BIBdecl

\bibitem{wang2021price}
J.~Wang, Z.~Jia, and S.~A. Jafar, ``Price of precision in coded distributed
  matrix multiplication: A dimensional analysis,'' in \emph{Proc., IEEE Inf.
  Theory Wksh.}, Virtual Conference, Oct. 2021, pp. 1--6.

\bibitem{soleymani2021analog}
M.~Soleymani, H.~Mahdavifar, and S.~Avestimehr, ``Analog lagrange coded
  computing,'' \emph{IEEE J. Sel. Areas Inf. Theory}, vol.~2, Feb. 2021.

\bibitem{li2021flexible}
W.~Li, Z.~Chen, Z.~Wang, S.~A. Jafar, and H.~Jafarkhani, ``Flexible
  constructions for distributed matrix multiplication,'' in \emph{Proc., IEEE
  ISIT}, Virtual Conference, Jul. 2021, pp. 1576--1581.

\bibitem{yao1979some}
A.~C.-C. Yao, ``Some complexity questions related to distributive computing
  (preliminary report),'' in \emph{Proc. ACM Symp. Theory of Computing},
  Atlanta, GA, Apr. 1979, pp. 209--213.

\bibitem{khalesi2022multi}
A.~Khalesi and P.~Elia, ``Multi-user linearly-separable distributed
  computing,'' \emph{arXiv preprint arXiv:2206.11119}, Jun. 2022.

\bibitem{jia2021capacity}
Z.~Jia and S.~Jafar, ``On the capacity of secure distributed batch matrix
  multiplication,'' \emph{IEEE Trans. Inf. Theo.}, vol.~67, pp. 7420--37, Sep.
  2021.

\bibitem{SlepWolf1973}
D.~Slepian and J.~K. Wolf, ``Noiseless coding of correlated information
  sources,'' \emph{IEEE Trans. Inf. Theory}, vol.~19, no.~4, pp. 471--480, Jul.
  1973.

\bibitem{Kor73}
J.~K\"orner, ``Coding of an information source having ambiguous alphabet and
  the entropy of graphs,'' in \emph{Proc., 6th Prague Conf. Inf. Theory},
  Prague, Czech Republic, Sep. 1973, pp. 411--425.

\bibitem{feizi2014network}
S.~Feizi and M.~M{\'e}dard, ``On network functional compression,'' \emph{IEEE
  Trans. Inf. Theory}, vol.~60, no.~9, pp. 5387--5401, Sep. 2014.

\bibitem{doshi2007source}
V.~Doshi, D.~Shah, and M.~M{\'e}dard, ``Source coding with distortion through
  graph coloring,'' in \emph{Proc., IEEE ISIT}, Nice, France, Jun. 2007, pp.
  1501--1505.

\bibitem{Malak2022hyperbin}
D.~Malak and M.~M\'edard, ``A distributed computationally aware quantizer
  design via hyper binning,'' \emph{IEEE Trans. Signal Proces.,}, vol.~71, pp.
  76--91, Jan. 2023.

\bibitem{caron1996nonsquare}
R.~Caron, X.~Li, P.~Mikusi{\'n}ski, H.~Sherwood, and M.~Taylor, ``Nonsquare
  doubly stochastic matrices,'' \emph{Lecture Notes}, pp. 65--75, Jan. 1996.

\bibitem{sinkhorn1964relationship}
R.~Sinkhorn, ``A relationship between arbitrary positive matrices and doubly
  stochastic matrices,'' \emph{Ann. Math. Stat.}, vol.~35, no.~2, pp. 876--879,
  Jun. 1964.

\bibitem{4285330}
J.~Sundararajan, S.~Deb, and M.~Medard, ``Extending the {Birkhoff-von Neumann}
  switching strategy for multicast - on the use of optical splitting in
  switches,'' \emph{IEEE J. Sel. A. Commun.}, vol.~25, pp. 36--50, Aug. 2007.

\bibitem{chang2000birkhoff}
C.-S. Chang, W.-J. Chen, and H.-Y. Huang, ``{Birkhoff-von Neumann} input
  buffered crossbar switches,'' in \emph{Proc., IEEE Int. Conf. Comp. Commun.},
  vol.~3, Mar. 2000, pp. 1614--23.

\bibitem{valls2021birkhoff}
V.~Valls, G.~Iosifidis, and L.~Tassiulas, ``Birkhoff’s decomposition
  revisited: Sparse scheduling for high-speed circuit switches,''
  \emph{IEEE/ACM Trans. Netw.}, vol.~29, no.~6, pp. 2399--2412, Aug. 2021.

\bibitem{dimakis2010gossip}
A.~G. Dimakis \emph{et~al.}, ``Gossip algorithms for distributed signal
  processing,'' \emph{Proc. of the IEEE}, vol.~98, no.~11, pp. 1847--1864,
  2010.

\bibitem{uccar2019partitioning}
B.~U{\c{c}}ar, ``Partitioning, matching, and ordering: Combinatorial scientific
  computing with matrices and tensors,'' Ph.D. thesis, ENS de Lyon, 2019.

\bibitem{tropp2022acm}
J.~A. Tropp, ``Acm 204: Matrix analysis,'' 2022.

\bibitem{safa2022low}
K.~Safa \emph{et~al.}, ``Low {PAPR} probabilistically controlled transitions
  scheme,'' in \emph{Proc., IEEE WCNC}, Austin, TX, Apr. 2022, pp. 2184--89.

\bibitem{hoyos2020approximate}
A.~Hoyos-Idrobo, ``Approximate {Birkhoff-von-Neumann} decomposition: a
  differentiable approach,'' 2020.

\bibitem{witsenhausen1976zero}
H.~Witsenhausen, ``The zero-error side information problem and chromatic
  numbers (corresp.),'' \emph{IEEE Trans. Inf. Theory}, vol.~22, no.~5, pp.
  592--593, Sep. 1976.

\bibitem{coleman2005towards}
T.~P. Coleman, M.~M{\'e}dard, and M.~Effros, ``Towards practical
  minimum-entropy universal decoding,'' in \emph{Proc., Data Compression
  Conf.}, Snowbird, UT, 2005, pp. 33--42.

\bibitem{korner1979encode}
J.~K{\"o}rner and K.~Marton, ``How to encode the modulo-two sum of binary
  sources,'' \emph{IEEE Trans. Inf. Theory}, vol.~25, pp. 219--221, Mar. 1979.

\bibitem{OR01}
A.~Orlitsky and J.~R. Roche, ``Coding for computing,'' \emph{IEEE Trans. Inf.
  Theory}, vol.~47, no.~3, p. 903–917, Mar. 2001.

\bibitem{rover2017discrete}
C.~R{\"o}ver and T.~Friede, ``Discrete approximation of a mixture distribution
  via restricted divergence,'' \emph{J. Comput. Graph. Stat.}, vol.~26, no.~1,
  pp. 217--222, Jan. 2017.

\bibitem{aziz2020simultaneously}
H.~Aziz, ``Simultaneously achieving ex-ante and ex-post fairness,'' in
  \emph{Proc., Int. Conf. Web and Internet Economics}.\hskip 1em plus 0.5em
  minus 0.4em\relax Springer, Dec. 2020.

\bibitem{benzi2000preconditioning}
M.~Benzi, J.~C. Haws, and M.~Tuma, ``Preconditioning highly indefinite and
  nonsymmetric matrices,'' \emph{SIAM J. Sci. Comput.}, vol.~22, no.~4, pp.
  1333--53, 2000.

\bibitem{gacs1973common}
P.~G{\'a}cs and J.~K{\"o}rner, ``Common information is far less than mutual
  information,'' \emph{Prob. Control and Inf. Theory}, vol.~2, pp. 149--162,
  1973.

\bibitem{edmonds1965paths}
J.~Edmonds, ``Paths, trees, and flowers,'' \emph{Canadian Journal of
  Mathematics}, vol.~17, pp. 449--467, 1965.

\bibitem{cover2012elements}
T.~M. Cover and J.~A. Thomas, \emph{Elements of Information Theory}.\hskip 1em
  plus 0.5em minus 0.4em\relax John Wiley \& Sons, 2012.

\bibitem{grimmett2001probability}
G.~R. Grimmett and D.~Stirzaker, \emph{Probability and Random Processes}.\hskip
  1em plus 0.5em minus 0.4em\relax Oxford university press, 2001.

\bibitem{Malak2022Structured}
\BIBentryALTinterwordspacing
D.~Malak, ``Distributed computing of permutation invariant functions with side
  information,'' 2023. [Online]. Available:
  \url{https://www.dropbox.com/s/0vprskidamnq2i8/Structured_distributions.pdf?dl=0}
\BIBentrySTDinterwordspacing

\end{thebibliography}

\end{document}